\newtheorem{prop}[theorem]{Proposition}
\newcommand{\lin}{linearizability}
\newcommand{\linb}{linearizable}
\newcommand{\Lin}{Linearizability}
\newcommand{\linz}{linearization}
\newcommand{\cmplt}{\textbf{cmplt}}
\newcommand{\myinv}{\textbf{inv}}
\newcommand{\myresp}{\textbf{resp}}
\newcommand{\mym}{\textbf{m}}
\newcommand{\Linz}{Linearization}
\newcommand{\his}{\text{H}}
\newcommand{\shis}{\text{S}}
\newcommand{\evts}{\textit{E}}
\newcommand{\myvert}{\vert}
\newcommand{\mybigvert}{ \ \vert \ }
\newcommand{\getreg}{\textbf{get-regiters}}
\newcommand{\objlevel}{\textbf{obj-level}}
\newcommand\blankfootnote[1]{%
	\let\thefootnote\relax\footnotetext{#1}%
	\let\thefootnote\svthefootnote%
}
\title{A Constructive Proof on the Compositionality of Linearizability}
\titlerunning{A Constructive Proof On the Compositionality of Linearizability}
\author[1]{Haoxiang Lin}
\affil[1]{
  Microsoft Research,
  Beijing, China\\
  \texttt{haoxlin@microsoft.com}
}
\authorrunning{Haoxiang Lin}
\subjclass{C.2.4 Distributed Systems}
\keywords{linearizability, compositionality, execution history, well-ordered structure}
\begin{document}

\maketitle

\begin{abstract}\label{sec:abs}	
	{\Lin} is the strongest correctness property for both shared memory and message passing systems.
	One of its useful features is the compositionality:
	a history (execution) is {\linb} if and only if each object (component) subhistory is {\linb}.
	In this paper, we propose a new hierarchical system model to address challenges in modular development of cloud systems.
	Object are defined by induction from the most fundamental atomic Boolean registers,
	and histories are represented as countable well-ordered structures of events
	to deal with both finite and infinite executions.
	Then, we present a new constructive proof on the
	compositionality theorem of {\lin} inspired by Multiway Merge.
	This proof deduces a theoretically efficient algorithm which generates {\linz}
	in $\mathcal{O}(\text{N}\times\log{}\text{P})$ running time with $\mathcal{O}(\text{N})$ space,
	where P and N are process/event numbers respectively.
\end{abstract}

\newpage

\section{Introduction}\label{sec:intro}

{\Lin}~\cite{Herlihy90} is the strongest correctness property for both shared memory and message passing systems.
Informally, a piece of process execution (e.g. a method call)
is {\linb} if it appears to take effect \textit{instantaneously} at some moment during the lifetime.
This implies that a concurrent {\linb} execution by multiple processes
should produce exactly the same result by some single-process sequential execution.
Thus, {\lin} is \textit{compositional}: a history (execution) is {\linb} if and only if each object (component) subhistory is {\linb}.
To prove this property, previous work~\cite{Herlihy86, Herlihy87, Herlihy90, Herlihy08} claims that there exists a global partial ordering
\footnote
{
	When referring to an ordering, we assume the strictness in that the ordering is irreflexive.
}
among {\linz}s of object subhistories
\footnote
{
	Proofs in~\cite{Herlihy86, Herlihy87, Herlihy08} are not sound.
	In~\cite{Herlihy86, Herlihy87}, the \textit{inter-object} partial ordering is omitted by mistake.
	In~\cite{Herlihy08}, a case is left out that there could be no maximal method call among the last method calls from object subhistories.
}
, and then extends it to the final {\linz}(s) by Order-Extension Principle~\cite{Szpilrajn30}.
Compositionality is very important because the correctness of a concurrent system can rely on its components instead of a centralized scheduler or additional component constraints~\cite{Herlihy90, Herlihy08}.

Recently, new challenges are emerging in modular development of cloud systems.
People usually design, implement and test a complex system from the bottom up with existing components layer by layer.
However, previous system model and compositionality result on {\lin} assume flat components and are not able to handle such \textit{hierarchy} well, leaving a gap between theory and practice.
In addition, previous studies on {\lin} normally investigate finite executions so that infinite executions
\footnote
{
	Paper~\cite{DBLP:conf/netys/GuerraouiR14} proves that {\lin} may not be always a safety property on infinite histories.
	We assume objects with finite nondeterminism to resolve the problem.
}
of long running cloud services are ruled out unfortunately.
The last challenge lies in system verification to provide a concrete {\linz} from component executions.
A typical algorithm traverses component {\linz}s to build the global partial ordering (e.g. represented by a directed acyclic graph), and then uses topological sorting~\cite{Cormen} for a final {\linz}.
Because a cloud system has lots of processes/threads and their executions take long enough,
such algorithm is considerably time consuming.

To address the above challenges,
we begin by proposing a universal model for hierarchical systems in section~\ref{sec:foundation},
which demonstrates theoretical benefits on concurrent properties not limited to {\lin}.
An object is defined by induction from the most fundamental \textit{atomic Boolean registers} with only six atomic primitive operations.
An object operation is defined similarly by induction from subobjects' operations following the syntax of a simple imperative programming language.
Histories are then defined as \textit{countable well-ordered structures} of events
to deal with both finite and infinite executions, and reveal precisely the internal event ordering.
Such history definition over a simple sequence by previous work has advantages of simplicity and extensibility 
in later discussion on history properties and operations.
Section~\ref{sec:def_lin} defines {\lin} using our new system model,
and we make a remedy to the original definition by excluding an intricate case.
In section~\ref{sec:comp},
we present a constructive proof on the compositionality theorem with our new methodology.
The proof deduces a Multiway Merge alike algorithm to generate a final {\linz} without the need to build the global partial ordering.
Section~\ref{sec:discussion} concludes and discusses future work.

\section{System Model}\label{sec:foundation}

\subsection{Overview}\label{sec:model_overview}

In this subsection, we briefly state our system model and recall notions that we use in the paper.
Formal definitions follow later.

Registers are historical names for shared memory locations.
For simplicity, we use only atomic Boolean registers~\cite{Lamport86-1, Lamport86-2}
with six atomic primitive operations for two reasons.
First, registers of other primitive data types such as 32-bit signed integer can be represented by Boolean registers in theoretical unary notation.
We implement any $M$-valued register as an array of $M$ Boolean registers~\cite{Herlihy08},
although it is astonishingly inefficient.
Second, for Boolean-size (1 byte) memories, nowadays computer architectures already offer atomic read/write semantics as well as other atomic instructions for Compare-And-Swap, Or etc.

Processes are single-threaded and exchange information with each other through shared objects (Definition \ref{def_object}) in parallel.
From a pedantic viewpoint, processes can be thought as instances of program graphs~\cite{Pnueli92}.
Each object has some non-overlapped and persistent registers for keeping values,
an associated type defining the value domain,
and some primitive operations as the only interfaces for object creation and manipulation.
A concurrent system is thus viewed as a \textit{finite} set of processes and objects.

Operations can be mapped to actions~\cite{Kropf99}.
Their executing instances are denoted as method calls.
We assume that object operation set is also finite,
and require processes carry out one method call by another until the process itself or the system halts.
This implies that processes issue a new method call only after all previous ones complete.
The set of method calls issued by a process is assumed countable to model infinite executions.
From the above description, we can separately number processes, objects, operations, and method calls with integers.

During the lifetime of a method call, we assume that unique events are generated.
E.g. there is an \textit{invocation} event at the very beginning when the method call starts,
and a \textit{response} event right at the moment it finishes;
while in the middle some others such as a \textit{print} event are emitted.
An event could be modeled as a triple of program location, action, and evaluation of the belonging object.
Only invocation and response events are considered
because they are enough to completely depict the behavior of method calls.

Formally, we characterize invocation and response events by five key factors:
in which process it is generated, which method call it belongs to,
on which object the method call manipulates, operation id, and the associated payloads.
An invocation event is then formalized as $inv \langle i, j, x, y, args^\ast \rangle $
, while a response event as $resp \langle i, j, x, y, term(res^\ast) \rangle $.
$i, j, x, y$ are all natural numbers.
If some event factors (e.g. payloads) are inessential to our discussion,
``\textunderscore'' will be used instead.
Explicit event types are for convenience, and they could be encoded in the payloads in fact.
Such definition expresses that the event belongs to the $j$-th method call
issued by process $p_i$ on the operation $op_y$ of object $o_x$.
$args^\ast$ and $res^\ast$ stand for arguments and results respectively
while $term$ is termination condition.
An invocation event $inv \langle i, j, x, y, args^\ast \rangle$ and response event $resp \langle r, s, u, v, term(res^\ast) \rangle$
are \textit{matching} if $(i = r) \land (j = s) \land (x = u) \land (y = v)$.

\subsection{Language Syntax}\label{sec:syntax}
Suppose we implement our system in a simple imperative programming language.
An operation is just a program statement or normally a finite sequence of statements.
Here is the language \textit{abstract syntax}:
		\begin{align*}
		b \in \textbf{BExp} \ &::= \ \textbf{true} \ | \ \textbf{false} \ | \ \neg b \ | \ b_1 \lor b_2 \ | \ b_1 \land b_2\\
		ST \in \textbf{Stmt} \ &::= \ [o.op] \ | \ ST_1;ST_2 \ | \ \textbf{if} \ [b] \ \textbf{then} \ ST_1 \ \textbf{else} \ ST_2  \ | \ \textbf{while} \ [b] \ \textbf{do} \ ST_1
		\end{align*}
$b$ is a Boolean expression, and $[o.op]$ represents the function call of operation $op$ on object $o$.
Arithmetic expressions are not listed separately since they correspond to object operations.
The abstract syntax serves for later inductive definition of object/operation (\ref{sec:object}).

\subsection{Object}\label{sec:object}
Objects are defined by induction from atomic Boolean registers.
\begin{definition}\label{def_object}
	Say that $o$ is an object if it is
	\begin{enumerate}[(1)]
		\item
		a \textit{register} object: $\langle r, \{
		\text{\sc{Read}}, \text{\sc{Write}}, \text{\sc{CompAndSwap}}
		, \text{\sc{And}}, \text{\sc{Or}}, \text{\sc{Not}}
		\} \rangle$.
		\begin{enumerate}[(a)]
			\item
			$r$ is an atomic Boolean register.
			\item
			$\text{\sc{Read}}/\cdots/\text{\sc{Not}}$
			are the only six atomic and primitive operations
			\footnote
			{
				Other operands of $\text{\sc{CompAndSwap}}, \text{\sc{And}}, \text{and } \text{\sc{Or}}$ are passed as arguments.
			}
			.
		\end{enumerate}
		\item
		a \textit{composite} object: $\langle \{o_{i_1}, o_{i_2}, \cdots, o_{i_n}\}, \{op_{j_1}, op_{j_2}, \cdots, op_{j_k}\}\rangle$.
		\begin{enumerate}[(a)]
			\item
			$\{o_{i_1}, o_{i_2}, \cdots, o_{i_n}\}$ is a finite set of objects which are called $o$'s subobjects.
			This set varies for each $o$.
			\item
			$\{op_{j_1}, op_{j_2}, \cdots, op_{j_k}\}$ is a finite set of operations,
			and each is a finite sequence of one or more program statements
			(\ref{sec:syntax}) on $o_{i \in [1, n]}$ following the syntax of previously defined language.
		\end{enumerate}
	\end{enumerate}	
\end{definition}

We define a function {\objlevel} on objects by recursion to return their levels:
\begin{align*}
	{\objlevel}(o) = 
	\begin{cases}
		0, & \mbox{if } o \text{ is a register object} \\
		\displaystyle{\max\{{\objlevel}(o_1), \cdots, {\objlevel}(o_n)\} + 1}, & \mbox{otherwise}
	\end{cases}
\end{align*}

Although there may be many objects in a system, we are more interested in those \textit{shared} ones,
which are not subobjects of others and should be accessed by at least two processes in theory.
This excludes stack and thread-local objects since their changes are not externally visible.

As mentioned earlier, a concurrent system is a finite set of processes and shared objects.
When we refer to a system state,
it is not the mathematical state defined in a formal state machine~\cite{Baier08}.
Instead, in this paper we denote it as the instantaneous snapshot~\cite{Afek:1993:ASS:153724.153741} of those shared objects' encapsulated registers,
returned by the following recursive function:
\begin{align*}
{\getreg}(o) = 
	\begin{cases}
		\{r\}, & \mbox{if } o \text{ is a register object} \\
		\displaystyle{\bigcup_{i=1}^{n}} \, {\getreg}(o_i), & \mbox{otherwise}
	\end{cases}
\end{align*}
Shared objects are additionally non-overlapped
so as to strictly limit manipulation only on predefined interfaces.
That is, for every two different shared objects $o, o^\prime$, ${\getreg}(o) \bigcap {\getreg}(o^\prime) = \emptyset$.

\subsection{History}\label{sec:history}

Events in an execution have inherent causal relationship with each other.
E.g. An invocation event and its matching response event are intra-process cause and effect.
If process $p_i$ sends a message to $p_j$, the response event of \textsc{Send} method call is the inter-process cause of the invocation event of \textsc{Receive}.

We have three assumptions on events and their causal relationship:
\begin{enumerate}[(1)]
	\item
	Each event is generated by only one process.
	\item
	The causal relationship is a partial ordering,
	which means if event $e_1$ is the cause of $e_2$ and $e_2$ is the cause of $e_3$, then $e_1$ is the cause of $e_3$.
	\item
	Events from the same process are countable, well-ordered under the causal relationship and thus isomorphic to a subset of natural numbers.
\end{enumerate}

Due to the finiteness of processes, this causal relationship is in fact a partial well ordering because of its well-foundedness~\cite{Enderton77} that every nonempty event subset contains a causal-minimal element.
To \textit{compare} each two events and find a \textit{least} one from any nonempty event subset,
we extend it to a well ordering
by the following Lemma \ref{par_to_well}
\footnote
{
	For extending arbitrary partial well orderings, please refer to~\cite{Lin-well2015}.
}
.
In practice, if the system has a global high-precision clock,
the event generation time could be leveraged.

\begin{lemma}\label{par_to_well}
	Assume $\langle \evts, \prec \rangle$ is a countable partially-ordered structure that:
	\begin{enumerate}[(1)]
		\item
		There exists a finite partition $\Pi = \{\evts_1, \evts_2, \cdots, \evts_m\}$ of $\evts$.
		\item
		For each $\evts_i \in \Pi$, $\prec \bigcap \ (\evts_i \times \evts_i)$ is a well ordering. 
	\end{enumerate}
	Then $\prec$ can be extended to a well ordering.
\end{lemma}
\begin{proof}
	By Order-Extension Principle~\cite{Szpilrajn30}, $\prec$ can be extended to a linear ordering $\prec^\prime$ on $\evts$.
	We claim that $\prec^\prime$ itself is the desired well ordering.
	For each $E_i \in \Pi$, $\prec^\prime \bigcap \ (\evts_i \times \evts_i) = \ \prec \bigcap \ (\evts_i \times \evts_i)$
	since the latter is already a well ordering.
	Let $A$ be a nonempty subset of $\evts$.
	$\Pi^\prime = \{A_i \mybigvert (A_i = A \bigcap \evts_i) \land (A_i \text{ is not empty})\}$ is a partition of $A$.
	Assume $\arrowvert \Pi^\prime \arrowvert = k \in [1, m]$.
	In each part of $A$, there exists a least element $e_{j \in [1, k]}$ under $\prec^\prime$.
	The set $\{e_1, e_2, \cdots, e_k\}$ is a nonempty finite set,
	so it has a least element $e^\star$ which is apparently the least of $A$.
\end{proof}

Now we define a concurrent execution by a \textit{history}:
\begin{definition}\label{def_his}
	A history $\his$ is a countable well-ordered structure $\langle \evts, \prec \rangle$ of invocation and response events such that:
	\begin{enumerate}[(1)]
		\item
		There exists a \textit{unique} finite partition $\Pi = \{\evts_1, \evts_2, \cdots, \evts_m\}$ of $\evts$
		where events in each $\evts_i$ share the same process id $i$.
		\item
		It is \textit{well-formed} such that for each $\evts_i \in \Pi$:
		\begin{enumerate}[(a)]
			\item
			The least element in each $\evts_i$ is an invocation event.
			\item
			Let $e_1 \in \evts_i$, and $e_2$ (if existing) be the least of $\evts_i \, \bigcap \, \{e \, \myvert \, e_1 \prec e\}$ under $\prec$:
			\begin{enumerate}[(i)]
				\item
				If $e_1$ is an invocation event, then $e_2$ is the matching response event.
				\item
				If $e_1$ is a response event, then $e_2$ is an invocation event.
			\end{enumerate}
		\end{enumerate}
	\end{enumerate}
\end{definition}
Such history definition over a simple sequence by previous work has advantages of simplicity and extensibility 
in later discussion on history properties and operations.

For brevity, $inv$ and $resp$ stand for an invocation event and a response event respectively.
$e$ or $e^\prime$ is either type of events.
If $inv$ and $resp$ are matching invocation and response events,
then $m = \langle \{inv, resp\}, \{\langle inv, resp\rangle\} \rangle$ represents a method call,
and $\myinv(m) = inv$, $\myresp(m) = resp$.
$\mym(e)$ is $e$'s belonging method call.
{\his}, $\his^\prime$ are histories.
$\evts_{\his}$ is the event set of {\his}, while $\prec_{\his}$ is the corresponding well ordering on $\evts_{\his}$.
We say $e \in \his$ if $e \in \evts_{\his}$, and $m \in \his$ if $\myinv(m),\, \myresp(m) \in \evts_{\his}$.
$o$ stands for an object, and $p$ for a process.

Two histories $\his$ and $\his^\prime$ are \textit{equal}
if $\evts_{\his} = \evts_{\his^\prime}$ and $\prec_{\his} = \prec_{\his^\prime}$.
$\his$ is a \textit{subhistory} of $\his^\prime$
if $(\evts_{\his} \subseteq \evts_{\his^\prime}) \land (\prec_{\his} \subseteq \prec_{\his^\prime})$.
Conversely, $\his^\prime$ is called an \textit{extension} of $\his$.
We denote such subhistory-extension relationship as $\his \subseteq \his^\prime$.
If $\evts \subseteq \evts_{\his^\prime}$,
then $\his = \langle \evts, \prec_{\his^\prime} \bigcap \ (\evts \times \evts) \rangle$ is a subhistory of $\his^\prime$,
because $\prec_{\his^\prime} \bigcap \ (\evts \times \evts)$ is a well ordering~\cite{Enderton77}
and $\his$ apparently meets the two requirements of history definition.
For $\his \subseteq \his^\prime$, we define their \textit{difference}
$\his^\prime - \his = \langle (\evts_{\his^\prime} \setminus \evts_{\his}), \prec_{\his^\prime} \bigcap \ ((\evts_{\his^\prime} \setminus \evts_{\his}) \times (\evts_{\his^\prime} \setminus \evts_{\his})) \rangle$.
$\his^\prime - \his$ is a history too.
We define the \textit{concatenation} function on histories by
$\his_1 \ast \his_2 = \langle \evts_{\his_1} \bigcup \evts_{\his_2}, \prec_{\his_1} \bigcup \prec_{\his_2} \bigcup \, (\prec_{\his_1} \times \prec_{\his_2})\rangle$.
The resulted structure is still well-ordered~\cite{Enderton77} to be a history.
$\his^{i \in \mathbb{N}}$ is an abbreviation to an empty history $\his_{\emptyset}$ if $i = 0$,
or $\his \ast \his \cdots \ast \his$ for $i$ times.

An invocation event is \textit{pending} if its matching response event does not exist.
A method call is \textit{pending} if its invocation event is pending;
it is \textit{complete} if both invocation and response events are in the history.
Function \cmplt(\his) is the maximal subhistory of {\his} without any pending invocation events.
A history {\his} is \textit{complete} if \cmplt({\his}) = {\his}.

From the well ordering on events, we induce a partial well ordering on method calls such that such that $m \prec_{call} m^\prime$ if $\myresp(m) \prec \myinv(m^\prime)$.
Here, $m^\prime$ may be pending.
Two method calls are \textit{concurrent} if neither's response event (if there exists one) precedes the invocation event of the other.
A history {\his} is \textit{sequential} if:
\begin{enumerate}[(1)]
	\item
	The least event of {\his} is an invocation event.
	\item
	Let $e_2$ be the least of $\{e \, \myvert \, e_1 \prec e\}$ under $\prec$:
	if $e_1$ is an invocation event, then $e_2$ is the matching response;
	if $e_1$ is a response event, then $e_2$ is an invocation.
\end{enumerate}
It is obvious that there will be at most one pending method call in a sequential history, and the partial well ordering of method calls is indeed a well ordering.
A history is \textit{concurrent} if it is not sequential.

A \textit{process subhistory} $\his \myvert p$ is the maximal subhistory of {\his} in which all events are generated by process $p$.
Then the well-formed requirement of history definition can be restated as: every process subhistory is sequential.
An \textit{object subhistory} $\his \myvert o$ is defined similarly for an object $o$.
{\his} is a \textit{single-object} history written as $\his_o$ if there exists $o$ such that $\his \myvert o = \his$.
Two histories $\his$ and $\his^\prime$ are \textit{equivalent}
if for each process $p$, $\his \myvert p = \his^\prime \myvert p$.
$\his$ and $\his^\prime$ are \textit{effect-equivalent} in a certain system state $s_1$
if there exists another state $s_2$ such that
$s_1 \stackrel{\his}{\longrightarrow} s_2 \iff s_1 \stackrel{\his^\prime}{\longrightarrow} s_2$.

We adopt the techniques in~\cite{Herlihy90} to define history correctness.
$\his$ is a \textit{prefix} of $\his^\prime$ if $\his \subseteq \his^\prime$
and for all events $e \in \his$ and $e^\prime \in \his^\prime - \his$ it holds that $e \prec_{\his^\prime} e^\prime$.
A set of histories is \textit{prefix-closed} if whenever {\his} is in the set, every prefix of {\his} are also the members. 
A \textit{sequential specification} for an object is a prefix-closed set of single-object sequential histories for that object.
A sequential history {\his} is \textit{legal} if for each object $o$, $\his \myvert o$ belongs to $o$'s sequential specification.

\section{Definition of Linearizability}\label{sec:def_lin}
\begin{definition}\label{def_lin}
	A history $\his = \langle \evts, \prec \rangle$ is \textit{\linb} if:
	\begin{description}
		\item[L1:]
		{\his} can be extended to a new history $\his^\prime = \langle \evts^\prime, \prec^\prime \rangle$ such that $(\his^\prime - \his)$ contains only response events.
		\item[L2:]
		\cmplt($\his^\prime$) is equivalent to some legal sequential history {\shis}.
		\item[L3:]
		For two different method calls $m$ and $m^\prime$, if $m \prec_{\cmplt(\his^\prime)} m^\prime$, then $m \prec_{\shis} m^\prime$.
	\end{description}
\end{definition}

Pending method calls in the original history may or may not have taken effects.
Those having taken effects are captured by extending {\his} with  future matching response events.
Later restriction to \cmplt($\his^\prime$) eliminates remaining pending method calls without real impact on the system.
Original definition of {\lin}~\cite{Herlihy90} takes event ordering inclusion $\prec_{\his} \subseteq \prec_{\shis}$
as condition \textbf{L3}.
However, it is not rigorous and misses an intricate case in which a pending method call in {\his} without real impact
is excluded from \cmplt($\his^\prime$) and thus from {\shis}.
That is why we take $\prec_{\cmplt(\his^\prime)}$ instead in the third condition.

The above legal sequential history {\shis} is called a \textit{\linz} of $\his$.
$\his$ may have more than one \linz.
E.g. if two concurrent complete method calls $m, m^\prime$ operate on different objects,
either one precedes the other could be legal in the final {\linz}.

\section{Proof on the Compositionality of Linearizability}\label{sec:comp}

\subsection{History Properties}\label{sec:hisproperty}
In this subsection, we deduce some properties of history for later usage in proving the compositionality of
linearizability.

\begin{prop}\label{lem_his_eq}
	Two histories $\his$ and $\his^\prime$ are equal if
	\begin{enumerate}[(1)]
		\item $e \in \his \iff e \in \his^\prime$
		\item\label{lem_his_eq_c2} $e \prec_{\his} e^\prime \iff e \prec_{\his^\prime} e^\prime$
	\end{enumerate}
\end{prop}
\begin{proof}
	This is a restatement of history equality.
	Condition 1 implies $\evts_{\his} = \evts_{\his^\prime}$ while condition~\ref{lem_his_eq_c2} implies $\prec_{\his} = \prec_{\his^\prime}$.
\end{proof}

\begin{prop}\label{lem_schis_eq}
	Two sequential and complete histories $\his$ and $\his^\prime$ are equal if
	\begin{enumerate}[(1)]
		\item\label{lem_schis_eq_c1}
		$m \in \his \iff m \in \his^\prime$
		\item\label{lem_schis_eq_c2}
		$m \prec_{\his} m^\prime \iff m \prec_{\his^\prime} m^\prime$
	\end{enumerate}
\end{prop}
\begin{proof}
	\begin{enumerate}[(a)]
		\item
		\begin{align*}
		e \in \his &\iff (\mym(e) \text{ is complete}) \land (\mym(e) \in \his)\\
		&\iff \mym(e) \in \his^\prime \\
		&\iff e \in \his^\prime
		\end{align*}
		
		\item
		Assume $e \prec_{\his} e^\prime$, then there are two cases:
		\begin{enumerate}[(i)]
			\item
			$\mym(e) = \mym(e^\prime)$. That is, $e$ and $e^\prime$ are matching events. Then $\mym(e)$ is also in $\his^\prime$ and $e \prec_{\his^\prime} e^\prime$.
			\item
			$\mym(e) \neq \mym(e^\prime)$. That is, $e$ and $e^\prime$ belong to two different complete method calls.
			There are four subcases on the types of $e$ and $e^\prime$. In each subcase, we have $m \prec_{\his} m^\prime$
			since a response event is the immediate successor of its matching invocation event in a sequential history.
			By condition~\ref{lem_schis_eq_c2}, we get $m \prec_{\his^\prime} m^\prime$ and then $e \prec_{\his^\prime} e^\prime$.
		\end{enumerate}	
		In both cases we have $e \prec_{\his} e^\prime \implies e \prec_{\his^\prime} e^\prime$.
		Similarly, we have $e \prec_{\his^\prime} e^\prime \implies e \prec_{\his} e^\prime$.
	\end{enumerate}
	By Proposition \ref{lem_his_eq}, we conclude that $\his$ and $\his^\prime$ are equal.
\end{proof}

\begin{prop}\label{lem_chis_1}
	\cmplt({\his}) is complete.
\end{prop}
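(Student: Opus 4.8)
The plan is to unfold the definition of $\text{complete}(\his)$ and check that a second application of the $\text{complete}(\cdot)$ operator leaves it fixed. Write $\his' = \text{complete}(\his)$. By definition $\his'$ is the maximal subhistory of $\his$ containing no pending invocation events, and ``$\his'$ is complete'' means precisely $\text{complete}(\his') = \his'$. It therefore suffices to show that $\his'$ itself has no pending invocation events: if so, then $\his'$ is a subhistory of $\his'$ with no pending invocations, and being all of $\his'$ it is trivially the maximal such, whence $\text{complete}(\his') = \his'$.

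To make this precise I would first give an explicit description of $\text{complete}(\his)$. Put $\evts_c = \{\, e \in \his : m(e) \text{ is complete in } \his \,\}$ and $\his_c = \langle \evts_c, \prec_{\his} \cap (\evts_c \times \evts_c) \rangle$; I claim $\his_c = \text{complete}(\his)$. On one hand $\his_c$ is a subhistory of $\his$ consisting only of events of complete method calls, so every invocation event in $\evts_c$ has its matching response in $\evts_c$ and $\his_c$ has no pending invocation. On the other hand, any subhistory $K \subseteq \his$ with no pending invocation events satisfies $K \subseteq \his_c$: each invocation event of $K$ has its matching response in $K \subseteq \his$ and so belongs to a complete method call, while each response event of $K$ has, by well-formedness (Definition~\ref{def_his}), its matching invocation in $K$ and hence again belongs to a complete method call. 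Thus $\his_c$ is the maximal pending-free subhistory, i.e.\ $\his_c = \text{complete}(\his)$.

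Granting the characterization, the statement is immediate: $\text{complete}(\his) = \his_c$ contains no pending invocation events by construction, so the maximal pending-free subhistory of $\text{complete}(\his)$ is $\text{complete}(\his)$ itself, giving $\text{complete}(\text{complete}(\his)) = \text{complete}(\his)$, which is exactly the assertion that $\text{complete}(\his)$ is complete.

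The only delicate point, and the step I expect to be the main obstacle, is the maximality argument: being certain that deleting the pending invocations cannot orphan a surviving response or produce a new pending invocation, and that what remains is still a genuine history. Here the well-formedness clauses of Definition~\ref{def_his} carry the load: a pending invocation has no matching response, so by the alternation condition it admits no successor in its process subhistory and is maximal there; removing maximal elements preserves both the ``least element is an invocation'' requirement and the alternation of invocations and responses, so $\his_c$ is a legitimate subhistory and no retained response loses its matching invocation. I would be careful to invoke well-formedness explicitly at exactly these two places.
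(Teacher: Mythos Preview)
Your argument is correct and follows the same idea as the paper's one-line proof: $\text{complete}(\his)$ has no pending invocation events, so its maximal pending-free subhistory is itself. You simply supply more detail, giving an explicit description $\his_c$ of $\text{complete}(\his)$ and using well-formedness to verify both that $\his_c$ is a genuine history and that it is maximal among pending-free subhistories; the paper takes these points for granted.
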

\begin{proof}
	\cmplt({\his}) does not have pending invocation events any more, therefore its maximal subhistory without pending invocation events is itself.
\end{proof}

\begin{prop}\label{lem_chis_2}
	If {\his} is complete, then both $\his \myvert o$ and $\his \myvert p$ are complete.
\end{prop}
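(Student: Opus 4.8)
The plan is to reduce the claim to its definition: a history is complete exactly when it contains no pending invocation events, so it suffices to show that neither $\his \myvert o$ nor $\his \myvert p$ harbors a pending invocation. The crucial observation is that a matching invocation/response pair shares all of its identifying indices; in particular $inv \langle i, j, x, y, args^\ast \rangle$ and its matching $resp \langle i, j, x, y, term(res^\ast) \rangle$ are generated by the same process $p_i$ and manipulate the same object $o_x$. Hence an invocation and its matching response can never be torn apart when we project $\his$ onto a single object or a single process.

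First I would treat the object case. Let $inv$ be an arbitrary invocation event of $\his \myvert o$. Since $\his \myvert o \subseteq \his$ and $\his$ is complete, $inv$ is not pending in $\his$, so its matching response event $resp$ lies in $\his$. Because $inv \in \his \myvert o$, it is generated on object $o$; by the matching condition $resp$ carries the same object index and is therefore also generated on $o$. As $\his \myvert o$ is by definition the maximal subhistory of $\his$ all of whose events are on $o$, it contains every $o$-event of $\his$, and in particular $resp$. Thus $inv$ is not pending in $\his \myvert o$. Since $inv$ was arbitrary, $\his \myvert o$ has no pending invocation events, i.e. complete($\his \myvert o$) $= \his \myvert o$, which is exactly completeness.

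The process case $\his \myvert p$ is dispatched by the identical argument, replacing ``object $o$'' with ``process $p$'' and invoking the fact that matching events share the process index $i$ rather than the object index $x$. (Recall that $\his \myvert o$ and $\his \myvert p$ are genuine subhistories, hence histories, so the predicate \emph{complete} is meaningful for them.)

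The one point that needs care --- and the only place the argument would fail under a differently-set-up model --- is the claim that the matching response stays inside the projection. This rests entirely on matching events carrying identical object and process identifiers, so I would make that dependence explicit by appealing to the definition of matching events rather than leaving it tacit; everything else is a direct unfolding of the definitions of \emph{pending}, \emph{complete}, and of the maximal single-object / single-process subhistory.
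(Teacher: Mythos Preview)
Your argument is correct and follows the same idea as the paper's proof, which is simply the one-line observation that $\his \myvert o$ and $\his \myvert p$ have no pending invocation events. You have merely unpacked this by making explicit why the matching response survives the projection (shared object/process index), which is left tacit in the paper.
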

\begin{proof}
	$\his \myvert o$ and $\his \myvert p$ do not have any pending invocation events.
\end{proof}

\begin{prop}\label{lem_his_eqiv}
	If $\his$ and $\his^\prime$ are equivalent, then $m \in \his \iff m \in \his^\prime$.
\end{prop}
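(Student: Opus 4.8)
The plan is to reduce membership of a whole method call to membership of its two constituent events, localize each event to the unique process subhistory that contains it, and then invoke the equivalence hypothesis directly at the level of that subhistory. Since nothing in the statement asks about ordering, I expect the argument to be pure bookkeeping once the right localization is isolated.

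First I would observe that every method call $m$ is attached to a single well-defined process $p$. In the event notation $inv\langle i,j,x,y,\cdot\rangle$ and $resp\langle i,j,x,y,\cdot\rangle$, the matching events $inv(m)$ and $resp(m)$ carry the same process index $i$, and by assumption each event is generated by exactly one process; so both events of $m$ belong to the same $p$. Next I would establish the localization fact that for any event $e$ generated by $p$, $e \in \his \Leftrightarrow e \in \his \myvert p$: the backward direction is immediate from $\his \myvert p \subseteq \his$, and the forward direction holds because $\his \myvert p$ is by definition the \emph{maximal} subhistory of $\his$ whose events all come from $p$, so every $p$-event of $\his$ already lies in $\his \myvert p$. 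Applying this to $inv(m)$ and to $resp(m)$ gives $m \in \his \Leftrightarrow inv(m) \in \his \myvert p \land resp(m) \in \his \myvert p$, together with the identical statement for $\his^\prime$.

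Finally I would unfold the definition of equivalence, namely $\his \myvert p = \his^\prime \myvert p$. By history equality these two process subhistories have the same event set, hence $inv(m) \in \his \myvert p \Leftrightarrow inv(m) \in \his^\prime \myvert p$ and likewise for $resp(m)$. Chaining the three equivalences (localization in $\his$, equality of the $p$-subhistories, localization in $\his^\prime$) yields $m \in \his \Leftrightarrow m \in \his^\prime$.

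I do not anticipate a real obstacle; the only delicate point is that the definition $m \in \his$ demands the presence of \emph{both} matching events, so a pending call whose response is absent must correctly fail to be a member. Treating $inv(m)$ and $resp(m)$ separately in the localization step keeps this case automatic and prevents conflating ``method call present'' with ``invocation present,'' which is exactly where a careless argument could slip.
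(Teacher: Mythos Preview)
Your proof is correct and follows essentially the same idea as the paper: localize $m$ to its process subhistory, invoke the equivalence hypothesis $\his\myvert p=\his^\prime\myvert p$, and delocalize. The paper compresses this into a three-line chain $m\in\his\Leftrightarrow\exists p\,(m\in\his\myvert p)\Leftrightarrow\exists p\,(m\in\his^\prime\myvert p)\Leftrightarrow m\in\his^\prime$, whereas you spell out the event-level localization for $inv(m)$ and $resp(m)$ separately, which is a bit more explicit but not a different argument.
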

\begin{proof}
	\begin{align*}
	m \in \his &\iff \exists p \ (m \in \his \myvert p)\\
	&\iff \exists p \ (m \in \his^\prime \myvert p)\\
	&\iff m \in \his^\prime
	\end{align*}
\end{proof}

\begin{prop}\label{lem_subhis_el}
	If $\his \subseteq \his^\prime$ and $e$, $e^\prime \in \his$,
	then $e \prec_{\his} e^\prime \iff e \prec_{\his^\prime} e^\prime$.
\end{prop}
\begin{proof}
	By the definition of subhistory.
\end{proof}

\begin{prop}\label{lem_subhis_ml}
	If $\his \subseteq \his^\prime$ and $m$, $m^\prime \in \his$,
	then $m \prec_{\his} m^\prime \iff m \prec_{\his^\prime} m^\prime$.
\end{prop}
\begin{proof}
	By Proposition \ref{lem_subhis_el}.
\end{proof}

\begin{prop}\label{lem_diffhis_oeq}
	If $\his \subseteq \his^\prime$, then $(\his^\prime - \his) \myvert o = \his^\prime \myvert o - \his \myvert o$.
\end{prop}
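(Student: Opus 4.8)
The plan is to prove the two histories equal by invoking Proposition~\ref{lem_his_eq}, so I must verify that they share the same events and the same ordering relation. The enabling step is to make the definition of an object subhistory explicit: writing $\evts_o$ for the (history-independent) set of all events acting on object $o$ — an event $inv\langle i,j,x,y,\_\rangle$ or $resp\langle i,j,x,y,\_\rangle$ lies in $\evts_o$ exactly when its object index $x$ names $o$ — the maximality clause in the definition of $\his\myvert o$ forces its event set to be precisely $\evts_{\his}\cap\evts_o$ and its ordering to be $\prec_{\his}$ restricted to that set. Before forming the right-hand side $\his^\prime\myvert o-\his\myvert o$ I would first confirm that the difference is legitimate, i.e. that $\his\myvert o\subseteq\his^\prime\myvert o$: from $\his\subseteq\his^\prime$ we have $\evts_{\his}\subseteq\evts_{\his^\prime}$ and $\prec_{\his}\subseteq\prec_{\his^\prime}$, and intersecting with $\evts_o$ (respectively with $\evts_o\times\evts_o$) preserves both inclusions. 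Since the left side is the object subhistory of a history and the right side is a well-defined difference, both are histories, so Proposition~\ref{lem_his_eq} applies.

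For the event condition I would compute both event sets and show they coincide. On the left, $e\in(\his^\prime-\his)\myvert o$ means $e\in\evts_o$ together with $e\in\his^\prime$ and $e\notin\his$. On the right, $e\in\his^\prime\myvert o-\his\myvert o$ means $e\in\his^\prime\myvert o$ and $e\notin\his\myvert o$, that is $(e\in\evts_o\wedge e\in\his^\prime)$ and $\neg(e\in\evts_o\wedge e\in\his)$; since $e\in\evts_o$ already holds, the negated conjunction collapses to $e\notin\his$, yielding the same three conditions. This is just the Boolean identity $(A\setminus B)\cap C=(A\cap C)\setminus(B\cap C)$ with $A=\evts_{\his^\prime}$, $B=\evts_{\his}$, $C=\evts_o$. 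Call the resulting common event set $F$.

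For the ordering condition I would show that both relations reduce to $\prec_{\his^\prime}$ restricted to $F\times F$. On the left, the ordering of $\his^\prime-\his$ is $\prec_{\his^\prime}$ cut down to $(\evts_{\his^\prime}\setminus\evts_{\his})\times(\evts_{\his^\prime}\setminus\evts_{\his})$, and taking the object subhistory cuts it further to $F\times F$; since $F\subseteq\evts_{\his^\prime}\setminus\evts_{\his}$ the nested restriction collapses to $\prec_{\his^\prime}\cap(F\times F)$. On the right, the ordering of $\his^\prime\myvert o$ is $\prec_{\his^\prime}$ cut down to $(\evts_{\his^\prime}\cap\evts_o)\times(\evts_{\his^\prime}\cap\evts_o)$, and the difference cuts it to $F\times F$; since $F\subseteq\evts_{\his^\prime}\cap\evts_o$ this likewise collapses to $\prec_{\his^\prime}\cap(F\times F)$. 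With both conditions of Proposition~\ref{lem_his_eq} verified, the two histories are equal.

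I do not expect a genuine obstacle here: once the object subhistory is identified with intersection by $\evts_o$ and the induced order, the statement is elementary set algebra, the two nontrivial facts being that intersection distributes over set difference and that an iterated restriction of a relation collapses to a single restriction by the smallest set. The only point warranting care is the preliminary check that $\his\myvert o\subseteq\his^\prime\myvert o$, so that the right-hand difference is defined at all.
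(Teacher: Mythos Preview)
Your proof is correct and follows essentially the same route as the paper: first check that $\his\myvert o\subseteq\his^\prime\myvert o$ so the right-hand difference makes sense, then verify that the two sides have the same event set and the same induced ordering, which is exactly the content of Proposition~\ref{lem_his_eq}. The only cosmetic difference is that you package the event argument via the set identity $(A\setminus B)\cap C=(A\cap C)\setminus(B\cap C)$ and the ordering argument via iterated restriction to a common set $F$, whereas the paper writes out the corresponding chains of equivalences element by element.
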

\begin{proof}
	\begin{enumerate}[(a)]
		\item
		It is obvious that $\his \myvert o \subseteq \his^\prime \myvert o$.
		
		\item
		\begin{align*}
		e \langle i, j, x, y, \_ \rangle \in (\his^\prime - \his) \myvert o &\iff (o_x = o) \land (e \langle i, j, x, y, \_ \rangle \in \his^\prime)\\
		&\land (e \langle i, j, x, y, \_ \rangle \notin \his) \\
		&\iff (e \langle i, j, x, y, \_ \rangle \in \his^\prime \myvert o) \land (e \langle i, j, x, y, \_ \rangle \notin \his \myvert o) \\
		&\iff e \langle i, j, x, y, \_ \rangle \in \his^\prime \myvert o - \his \myvert o
		\end{align*}
		
		\item
		\begin{align*}
		e \prec_{(\his^\prime - \his) \myvert o} e^\prime &\iff (e, e^\prime \in {(\his^\prime - \his) \myvert o}) \land (e \prec_{\his^\prime - \his} e^\prime)\\
		&\iff (e, e^\prime \in \his^\prime \myvert o - \his \myvert o ) \land (e \prec_{\his^\prime - \his} e^\prime)\\
		&\iff (e, e^\prime \in \his^\prime \myvert o) \land (e, e^\prime \notin \his \myvert o) \land (e \prec_{\his^\prime} e^\prime) \\
		&\iff (e, e^\prime \in \his^\prime \myvert o) \land (e, e^\prime \notin \his \myvert o) \land (e \prec_{\his^\prime \myvert o} e^\prime) \\
		&\iff e \prec_{\his^\prime \myvert o - \his \myvert o} e^\prime
		\end{align*}	
	\end{enumerate}
\end{proof}

\begin{prop}\label{lem_inv}
	Assume $\his \subseteq \his^\prime$, $\his^\prime - \his$ contains only response events and $e$, $e^\prime$ are invocation events, then:
	\begin{enumerate}[(1)]
		\item
		$e \in \his \iff e \in$ \cmplt({\his}) $\iff e \in \his^\prime \iff e \in$ \cmplt($\his^\prime$)
		\item
		$e \prec_{\his} e^\prime \iff e \prec_{\cmplt(\his)} e^\prime \iff e \prec_{\his^\prime} e^\prime \iff e \prec_{\cmplt(\his^\prime)} e^\prime$
	\end{enumerate}
\end{prop}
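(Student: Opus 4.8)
The plan is to prove Proposition~\ref{lem_inv} by establishing the two chains of equivalences separately, exploiting the structural hypothesis that $\his^\prime - \his$ consists purely of response events. The key observation driving both parts is that adding only response events cannot change which invocation events are present, nor can it alter the relative ordering among invocation events; response events are ``inserted'' into the history but the invocation skeleton stays fixed.

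For part (a), I would first handle the middle equivalence $e \in \his \Leftrightarrow e \in \his^\prime$. The direction $e \in \his \Rightarrow e \in \his^\prime$ is immediate from $\his \subseteq \his^\prime$. For the converse, suppose $e \in \his^\prime$ is an invocation event; since $\evts_{\his^\prime} = \evts_{\his} \cup \evts_{\his^\prime - \his}$ and $\his^\prime - \his$ contains only response events, $e$ cannot lie in $\his^\prime - \his$, hence $e \in \his$. Next I would connect each history to its completion. By definition, $\mathrm{complete}(\his)$ is the maximal subhistory of $\his$ without pending invocation events, so $\mathrm{complete}(\his) \subseteq \his$ gives one inclusion for free. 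The subtle point is the reverse: I must argue that every invocation event surviving in $\his$ also survives in $\mathrm{complete}(\his)$. The natural approach is to show that an invocation event $e$ with $m(e)$ pending in $\his$ is \emph{not} pending in $\his^\prime$ precisely when its matching response lies in $\his^\prime - \his$, but because we only want to compare invocation events across all four histories, I would instead argue directly that the set of invocation events is identical in all four: passing to $\mathrm{complete}(\cdot)$ only removes invocation events whose method calls are pending, and such removals must be tracked carefully.

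The main obstacle is exactly this interaction between pendingness and completion. Concretely, an invocation event $e$ could be pending in $\his$ (so one might fear $e \notin \mathrm{complete}(\his)$) yet complete in $\his^\prime$ because its matching response is one of the added events in $\his^\prime - \his$. I expect the resolution to require showing that the \emph{statement as written} already accounts for this: since $\mathrm{complete}(\his)$ retains an invocation event iff its method call is complete, and the added responses can only make more method calls complete, I would verify that the hypotheses force the invocation-event sets of $\his$, $\mathrm{complete}(\his)$, $\his^\prime$, and $\mathrm{complete}(\his^\prime)$ to coincide. The cleanest route is to prove $e \in \mathrm{complete}(\his) \Leftrightarrow e \in \his$ for invocation events by observing that completion removes only the invocation events of pending calls, and then to show that whether a call is pending is irrelevant to membership of its invocation event in the union structure once we restrict attention to invocations. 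I anticipate needing to state as an auxiliary fact that an invocation event is always present whenever its method call's invocation has been issued, independent of the response.

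For part (b), I would reduce the ordering equivalences to part (a) together with Proposition~\ref{lem_subhis_el}. Given two invocation events $e, e^\prime$ all lying in the smallest of the four histories, Proposition~\ref{lem_subhis_el} immediately yields $e \prec_{\his} e^\prime \Leftrightarrow e \prec_{\his^\prime} e^\prime$ from $\his \subseteq \his^\prime$, and likewise $\mathrm{complete}(\his) \subseteq \his$ and $\mathrm{complete}(\his^\prime) \subseteq \his^\prime$ give the completion equivalences, since the ordering relation on a subhistory is the restriction of the ordering on its extension. The chain then closes by transitivity of $\Leftrightarrow$, using part (a) to guarantee that the events in question genuinely belong to each of the four histories so that the subhistory restriction lemma applies. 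This second part should be routine once part (a) is secured; essentially all the real work lies in pinning down the invocation-event membership across completions.
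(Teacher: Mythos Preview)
Your plan is considerably more careful than the paper's own argument, which is a single sentence: ``complete($\his$), $\his^\prime$ and complete($\his^\prime$) do not affect the orders between invocation events of the original $\his$.'' For the equivalences $e\in\his\Leftrightarrow e\in\his^\prime$ and the ordering part (b) via Proposition~\ref{lem_subhis_el}, your reasoning is correct and matches what that sentence is gesturing at.

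However, you correctly isolate the delicate point---the direction $e\in\his\Rightarrow e\in\mathrm{complete}(\his)$---and then do not actually discharge it. Your proposed resolution (``verify that the hypotheses force the invocation-event sets of $\his$, $\mathrm{complete}(\his)$, $\his^\prime$, $\mathrm{complete}(\his^\prime)$ to coincide'') does not go through: if $e$ is a pending invocation in $\his$, then by definition $e\notin\mathrm{complete}(\his)$, and nothing in the hypotheses $\his\subseteq\his^\prime$ with $\his^\prime-\his$ containing only responses prevents $\his$ from having pending invocations. So the equivalence $e\in\his\Leftrightarrow e\in\mathrm{complete}(\his)$ is simply false in general, and the same issue arises for $e\in\his^\prime\Leftrightarrow e\in\mathrm{complete}(\his^\prime)$ whenever $\his^\prime$ still has pending calls.

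The paper's one-line proof does not confront this either; the proposition as stated is too strong. What the paper actually \emph{uses} later (in Lemma~\ref{lemma_lin_if}) are only the implications from $\mathrm{complete}(\his^\prime)$ back to $\his$, and those directions are sound because $\mathrm{complete}(\his^\prime)\subseteq\his^\prime$ and invocation events of $\his^\prime$ already lie in $\his$. So your instinct that something is off is right; the fix is to weaken the claim (drop the $\mathrm{complete}(\his)$ node, or state only the implications actually needed), not to search for an argument that all four invocation-event sets coincide.
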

\begin{proof}
	\cmplt({\his}), $\his^\prime$ and \cmplt($\his^\prime$) do not affect the orders between invocation events of the original {\his}.
\end{proof}

\begin{prop}\label{lem_subhis_op}
	$(\his \myvert o) \myvert p = (\his \myvert p) \myvert o$
\end{prop}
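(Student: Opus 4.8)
The plan is to invoke Proposition \ref{lem_his_eq}, which reduces the claimed history equality to two independent checks: that the two sides contain exactly the same events, and that they induce the same ordering on those events. Both $(\his \myvert o)\myvert p$ and $(\his \myvert p)\myvert o$ are obtained from $\his$ by two successive restrictions of the event set, so each is a subhistory of $\his$; the whole content of the statement is that the two \emph{orders} of restriction commute. Because the five-factor encoding records the process index $i$ and the object index $x$ as separate, independent components of an event, I expect this commutation to fall out immediately.

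For the event sets I would unfold both subhistory operations using the notation $e\langle i,j,x,y,\_\rangle$, exactly as in the proof of Proposition \ref{lem_diffhis_oeq}. Recalling that $\his \myvert o$ collects precisely those events of $\his$ with $o_x = o$ and $\his \myvert p$ those with $p_i = p$, the chain of equivalences reads
\begin{align*}
e\langle i,j,x,y,\_\rangle \in (\his \myvert o)\myvert p &\Leftrightarrow (p_i = p) \land (e\langle i,j,x,y,\_\rangle \in \his \myvert o)\\
&\Leftrightarrow (p_i = p) \land (o_x = o) \land (e\langle i,j,x,y,\_\rangle \in \his)\\
&\Leftrightarrow (o_x = o) \land (e\langle i,j,x,y,\_\rangle \in \his \myvert p)\\
&\Leftrightarrow e\langle i,j,x,y,\_\rangle \in (\his \myvert p)\myvert o,
\end{align*}
where the middle step is merely the commutativity of conjunction. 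This establishes condition (a) of Proposition \ref{lem_his_eq}.

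For the orders, the key observation is that both sides are subhistories of $\his$ sharing the common event set identified above (each arising by composing the two restriction steps, and the subhistory relation is transitive in both event-set inclusion and order inclusion). I would therefore apply Proposition \ref{lem_subhis_el} twice: since $(\his \myvert o)\myvert p \subseteq \his$, for any two of its events we have $e \prec_{(\his \myvert o)\myvert p} e' \Leftrightarrow e \prec_{\his} e'$, and likewise $e \prec_{(\his \myvert p)\myvert o} e' \Leftrightarrow e \prec_{\his} e'$. Chaining these two equivalences yields condition (b), and Proposition \ref{lem_his_eq} then delivers the equality.

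I do not anticipate a genuine obstacle: the statement is essentially the commutativity of two filtering operations acting on independent coordinates of an event. The only point demanding care is confirming that the iterated restriction of the well ordering is \emph{literally} the same relation on both sides — namely $\prec_{\his}$ cut down to the shared event set — rather than merely two relations that happen to agree; this is exactly what the two applications of Proposition \ref{lem_subhis_el} secure, so the argument remains entirely routine.
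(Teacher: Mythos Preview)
Your proposal is correct and follows essentially the same approach as the paper: verify equality of the event sets by unfolding the two restriction conditions, then verify equality of the orderings by reducing both to $\prec_{\his}$ on the common event set. The only cosmetic difference is that the paper chains the order equivalences through the intermediate subhistories $\his\myvert o$ and $\his\myvert p$ step by step, whereas you invoke transitivity of $\subseteq$ and apply Proposition~\ref{lem_subhis_el} directly against $\his$; this is the same argument packaged slightly differently.
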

\begin{proof}
	\begin{enumerate}[(a)]
		\item
		\begin{align*}
		e \langle i, j, x, y, \_ \rangle \in (\his \myvert o) \myvert p &\iff (p_i = p) \land (e \langle i, j, x, y, \_ \rangle \in \his \myvert o)\\
		&\iff (p_i = p) \land (o_x = o) \land (e \langle i, j, x, y, \_ \rangle \in \his)\\
		&\iff (e \langle i, j, x, y, \_ \rangle \in \his \myvert p) \land (o_x = o)\\
		&\iff e \langle i, j, x, y, \_ \rangle \in (\his \myvert p) \myvert o
		\end{align*}
		\item
		\begin{align*}
		e \prec_{(\his \myvert o) \myvert p} e^\prime &\iff e \prec_{\his \myvert o} e^\prime \iff e \prec_{\his} e^\prime\\
		&\iff e \prec_{\his \myvert p} e^\prime \iff e \prec_{(\his \myvert p) \myvert o} e^\prime
		\end{align*}
	\end{enumerate}
	Therefore $(\his \myvert o) \myvert p = (\his \myvert p) \myvert o$.
\end{proof}

\begin{prop}\label{lem_subhis_complo}
	$\cmplt(\his \myvert o) = \cmplt(\his) \myvert o$.
\end{prop}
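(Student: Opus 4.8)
The plan is to reduce the claimed history equality to the two pointwise equivalences furnished by Proposition~\ref{lem_his_eq}. That is, writing $L = \text{complete}(\his \myvert o)$ and $R = \text{complete}(\his) \myvert o$, it suffices to prove, for arbitrary events $e, e^\prime$,
\begin{enumerate}[(a)]
	\item $e \in L \Leftrightarrow e \in R$, and
	\item $e \prec_L e^\prime \Leftrightarrow e \prec_R e^\prime$.
\end{enumerate}
Both $L$ and $R$ are obtained from $\his$ by composing the two operations $\text{complete}(\cdot)$ and $\cdot \myvert o$ in opposite orders, so the whole content of the proposition is that these two operations commute; Proposition~\ref{lem_his_eq} lets me verify this one event at a time.

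For part~(a), I would unfold each side against the two defining operations. First note that completing a history deletes \emph{exactly} the pending invocation events: a response event always has its matching invocation earlier in its (sequential) process subhistory, so no response is ever pending, and hence $e \in \text{complete}(\his)$ is equivalent to ``$e \in \his$ and $m(e)$ is complete in $\his$''. Unfolding $e \in R$ then gives ``$e \in \his$, $e$ lies on object $o$, and $m(e)$ is complete in $\his$,'' whereas unfolding $e \in L$ gives ``$e \in \his$, $e$ lies on object $o$, and $m(e)$ is complete in $\his \myvert o$.'' The two differ only in whether completeness of $m(e)$ is tested in $\his$ or in $\his \myvert o$. The key fact that closes the gap is that a method call manipulates a single object, so $inv(m(e))$ and $resp(m(e))$ carry the same object index $x$; once $e$ is known to lie on $o$, both events of $m(e)$ lie on $o$, and therefore $inv(m(e)), resp(m(e)) \in \his \myvert o$ iff $inv(m(e)), resp(m(e)) \in \his$. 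Hence $m(e)$ is complete in $\his \myvert o$ iff it is complete in $\his$, and the two membership conditions coincide.

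For part~(b), the orderings require no fresh argument: the order on $L$ is the restriction of $\prec_{\his \myvert o}$, itself a restriction of $\prec_{\his}$, and the order on $R$ is the restriction of $\prec_{\text{complete}(\his)}$, again a restriction of $\prec_{\his}$; both are therefore the restriction of the single relation $\prec_{\his}$ to their event set, which by part~(a) is the same set. Invoking Proposition~\ref{lem_subhis_el} (both $L$ and $R$ are subhistories of $\his$ sharing the same carrier) immediately yields $e \prec_L e^\prime \Leftrightarrow e \prec_{\his} e^\prime \Leftrightarrow e \prec_R e^\prime$. I expect the only real obstacle to be the bookkeeping in part~(a)---specifically making explicit that ``complete'' commutes with restriction precisely because the single-object property forces both events of any method call onto the same object; everything else is a routine unfolding of definitions, and Proposition~\ref{lem_his_eq} together with Proposition~\ref{lem_subhis_el} then assembles the conclusion.
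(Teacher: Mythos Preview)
Your proposal is correct and follows essentially the same route as the paper: reduce the equality to the two clauses of Proposition~\ref{lem_his_eq}, check event membership by unfolding the $\text{complete}(\cdot)$ and $\cdot\myvert o$ operations, and check the ordering by noting that both sides inherit the restriction of $\prec_{\his}$. If anything, you are more explicit than the paper about why completeness of $m(e)$ in $\his\myvert o$ coincides with completeness in $\his$ (namely that $inv(m(e))$ and $resp(m(e))$ share the same object index), a point the paper leaves implicit.
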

\begin{proof}
	\begin{enumerate}[(a)]
		\item
		\begin{align*}
		e \langle i, j, x, y, \_ \rangle \in \cmplt(\his \myvert o) &\iff (\mym(e) \text{ is complete}) \land (o_x=o)\\
		&\iff (e \langle i, j, x, y, \_ \rangle \in \cmplt(\his)) \land (o_x=o)\\
		&\iff e \langle i, j, x, y, \_ \rangle \in \cmplt(\his) \myvert o
		\end{align*}
		
		\item
		Because \cmplt({\his}) and \cmplt($\his \myvert o$) are also subhistories of {\his}, we have
		\begin{align*}
		e \prec_{\cmplt(\his \myvert o)} e^\prime &\iff e \prec_{\his \myvert o} e^\prime \iff e \prec_{\his} e^\prime\\
		&\iff e \prec_{\cmplt(\his)} e^\prime \iff e \prec_{\cmplt(\his) \myvert o} e^\prime
		\end{align*}
	\end{enumerate}
	Therefore \cmplt($\his \myvert o$) = \cmplt({\his})$\myvert o$.
\end{proof}

\subsection{The Compositionality Theorem}\label{comp_thm}

\begin{theorem}\label{thm_lin}
	{\his} is {\linb} if and only if, for each object $o$, $\his \myvert o$ is {\linb}.
\end{theorem}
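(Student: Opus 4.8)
The plan is to prove the two directions separately, treating the ``only if'' direction as routine bookkeeping and concentrating the real work on the ``if'' direction, which is where the merge-sort construction lives.

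For the forward direction, suppose $\his$ is \linb, witnessed by an extension $\his'$ (adding only response events) and a legal sequential \linz{} $\shis$. I would show that, for each object $o$, the triple $(\his\myvert o,\ \his'\myvert o,\ \shis\myvert o)$ witnesses \linz{} of $\his\myvert o$. Condition \textbf{L1} holds because $\his'\myvert o-\his\myvert o=(\his'-\his)\myvert o$ by Proposition~\ref{lem_diffhis_oeq}, so only responses are added. For \textbf{L2}, Proposition~\ref{lem_subhis_complo} gives $\mathrm{complete}(\his'\myvert o)=\mathrm{complete}(\his')\myvert o$, and equivalence is preserved under restriction to $o$: using Proposition~\ref{lem_subhis_op} one reduces $(\mathrm{complete}(\his')\myvert o)\myvert p$ to $(\mathrm{complete}(\his')\myvert p)\myvert o=(\shis\myvert p)\myvert o=(\shis\myvert o)\myvert p$, while $\shis\myvert o$ is legal and sequential as a single-object subhistory of $\shis$. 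Condition \textbf{L3} follows since the method-call order restricts correctly by Proposition~\ref{lem_subhis_ml}. Every ingredient here is already packaged in the propositions of \S2.3.

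The reverse direction is the heart of the theorem. Assume each $\his\myvert o$ is \linb, and fix for every object a witnessing extension $G_o:=(\his\myvert o)'$ (adding only responses) with legal sequential \linz{} $\shis_o$. I would first assemble a single extension $\his'$ of $\his$ with $\his'\myvert o=G_o$ for every $o$: the added responses of distinct objects are disjoint and each $G_o$ agrees with $\prec_\his$ on the old events, so the union $\prec_\his\cup\bigcup_o\prec_{G_o}$ is a finitely-partitionable partial order whose restriction to each object is already a well ordering, and Theorem~\ref{par_to_well} extends it to a well ordering realizing all the $G_o$ simultaneously; one checks the result is a genuine history (each process gains at most its single pending call's response, placed last, so well-formedness survives). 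By Proposition~\ref{lem_subhis_complo} this gives $\mathrm{complete}(\his')\myvert o=\mathrm{complete}(G_o)$, equivalent to $\shis_o$, and it secures \textbf{L1}.

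It remains to fuse the family $\{\shis_o\}$ into one legal sequential \linz{} $\shis$ of $\mathrm{complete}(\his')$, and this is the step modelled on merge sort. Each $\shis_o$ is a well-ordered list of method calls; I would interleave them by transfinite recursion, at each stage emitting a call that is both the $\prec_{\shis_o}$-least not-yet-emitted call of its object (a \emph{head}) and minimal under the real-time order $\prec_{\mathrm{complete}(\his')}$ among all not-yet-emitted calls. Such a call is precisely a minimal element of the combined relation ``earlier in some $\shis_o$ on a common object, or earlier in real time.'' I expect the main obstacle to be showing that an emittable call exists at every stage and that the recursion exhausts all calls. The key lemma is that the combined relation is acyclic (a Herlihy--Wing-style argument, using that $\shis_o$ respects $\prec_{\mathrm{complete}(\his')}$ within $o$); then, because the partition into objects is \emph{finite}, a short counting argument shows a non-blocked head must exist: were every head blocked by a non-head call of another object, chasing the blocker back to its own head would, among finitely many heads, close a cycle in the combined relation, contradicting acyclicity. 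This finiteness is exactly the point the induction of~\cite{Herlihy08} overlooked. Exhaustiveness in the infinite case is where the well-ordering framework and the strengthened definition of \S3 are needed, to guarantee each call has a well-ordered set of predecessors and is therefore reached.

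Finally I would verify \textbf{L2} and \textbf{L3} for $\shis$. By construction $\shis\myvert o=\shis_o$, so every object subhistory of $\shis$ lies in $o$'s sequential specification and $\shis$ is legal. For the equivalence in \textbf{L2}, note $\shis$ and $\mathrm{complete}(\his')$ have the same method calls and agree object-by-object; within any single process the calls are totally ordered in real time and the emission rule respects that order, so $\mathrm{complete}(\his')\myvert p=\shis\myvert p$ for every $p$ by Proposition~\ref{lem_schis_eq}. Condition \textbf{L3} is immediate from the emission rule, since $m\prec_{\mathrm{complete}(\his')}m'$ forbids emitting $m'$ before $m$. The only genuinely hard step is the existence-of-an-emittable-head lemma underlying the merge; the rest is the restriction-and-recombination algebra of \S2.3.
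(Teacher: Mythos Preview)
Your forward direction matches the paper's Lemma~\ref{lemma_lin_onlyif} essentially verbatim, and your overall decomposition into the two lemmas is the same.

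The reverse direction, however, is organised quite differently from the paper. The paper's merge does \emph{not} select a real-time-minimal head and does not need an acyclicity lemma at all. Instead it breaks ties between heads of different objects by the order of their \emph{invocation events} in the original $\his$; since every invocation already lies in $\his$ and $\prec_\his$ is a well ordering, this is a total, explicit rule. The paper then writes down $\prec_{\shis}$ in closed form (the union of the $\prec_{\shis_o}$ together with all cross-object pairs ordered by $inv(m)\prec_\his inv(m')$) and checks directly that it is a well ordering and that \textbf{L1}--\textbf{L3} hold. Likewise $\his'$ is given by an explicit formula rather than via Theorem~\ref{par_to_well}. So where you invoke a Herlihy--Wing acyclicity argument and a transfinite emission process (with an exhaustion argument at limit stages), the paper sidesteps both by choosing the single tie-breaking invariant ``compare invocations in $\his$.''

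Both routes work, but they buy different things. Your construction is closer in spirit to the original Herlihy--Wing partial-order proof and makes transparent \emph{why} compositionality holds (no cycle can form among the per-object linearizations and the global real-time order). The paper's construction is shorter and strictly more constructive: it never appeals to Theorem~\ref{par_to_well} (hence never to Szpilrajn's order-extension principle) and requires no choice at any stage, which is precisely the point the introduction advertises. Your appeal to Theorem~\ref{par_to_well} to build $\his'$ quietly reintroduces the non-constructive step the paper set out to eliminate, and your exhaustion argument for the infinite case, while salvageable via a Hartogs-type bound on the recursion length, is work the paper's closed-form definition of $\prec_{\shis}$ simply does not need.
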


We prove ``only if'' and ``if'' parts separately.

\begin{lemma}\label{lemma_lin_onlyif}
	{\his} is {\linb} only if for each object $o$, $\his \myvert o$ is {\linb}.
\end{lemma}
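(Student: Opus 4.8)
The plan is to lift a witness for the {\lin} of {\his} down to each object. Since {\his} is {\linb}, Definition~\ref{def_lin} supplies an extension $\his^\prime$ whose added events $\his^\prime - \his$ are all response events, a legal sequential history {\shis} equivalent to complete($\his^\prime$), and the order-preservation property \textbf{L3}. Fixing an object $o$, I would take $\his^\prime \myvert o$ as the required extension of $\his \myvert o$ and $\shis \myvert o$ as its candidate {\linz}, then verify the three clauses \textbf{L1}, \textbf{L2}, \textbf{L3} of Definition~\ref{def_lin} in turn. The whole argument is a matter of pushing the projection, difference and completion operators through one another using the propositions of Section~\ref{sec:foundation}.

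For \textbf{L1} I must show $\his^\prime \myvert o - \his \myvert o$ contains only response events. By Proposition~\ref{lem_diffhis_oeq} this difference equals $(\his^\prime - \his) \myvert o$, and restricting a history made up solely of response events to a single object can only delete events, so the outcome still contains only response events. For \textbf{L2} the goal is that complete($\his^\prime \myvert o$) is equivalent to $\shis \myvert o$, and that $\shis \myvert o$ is a legal sequential history. I would first note that $\shis \myvert o$ is sequential, since in the sequential history {\shis} each invocation is immediately succeeded by its matching response on the \emph{same} object, so adjacency survives restriction to $o$; and it is legal because its only nonempty object subhistory is $\shis \myvert o$ itself, which lies in $o$'s sequential specification as {\shis} is legal. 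The equivalence is then a bookkeeping computation: for every process $p$, combining Proposition~\ref{lem_subhis_complo}, Proposition~\ref{lem_subhis_op} and the equivalence of complete($\his^\prime$) with {\shis},
\begin{align*}
\text{complete}(\his^\prime \myvert o) \myvert p
  &= \text{complete}(\his^\prime) \myvert o \myvert p
   = \text{complete}(\his^\prime) \myvert p \myvert o \\
  &= \shis \myvert p \myvert o = \shis \myvert o \myvert p .
\end{align*}

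For \textbf{L3}, take distinct method calls $m, m^\prime \in$ complete($\his^\prime \myvert o$) with $m \prec_{\text{complete}(\his^\prime \myvert o)} m^\prime$. Since complete($\his^\prime \myvert o$) $=$ complete($\his^\prime$)$\myvert o$ (Proposition~\ref{lem_subhis_complo}) is a subhistory of complete($\his^\prime$), Proposition~\ref{lem_subhis_ml} upgrades this to $m \prec_{\text{complete}(\his^\prime)} m^\prime$, whence \textbf{L3} for {\his} gives $m \prec_{\shis} m^\prime$. Because complete($\his^\prime \myvert o$) and $\shis \myvert o$ are equivalent, Proposition~\ref{lem_his_eqiv} places both $m$ and $m^\prime$ in $\shis \myvert o$, so a second application of Proposition~\ref{lem_subhis_ml} to $\shis \myvert o \subseteq \shis$ recovers $m \prec_{\shis \myvert o} m^\prime$, as required.

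The content here is routine rather than deep --- this is the easy half of compositionality --- so the main obstacle is simply the disciplined bookkeeping: keeping the operators complete, $\cdot \myvert o$ and $\cdot \myvert p$ commuting correctly (which is exactly the purpose of Propositions~\ref{lem_subhis_op}, \ref{lem_subhis_complo} and \ref{lem_diffhis_oeq}), and confirming that legality and sequentiality genuinely descend to the single-object restriction $\shis \myvert o$. The genuinely hard direction, reconstructing one global {\linz} by merging the per-object ones in the spirit of merge sort, is reserved for the converse lemma.
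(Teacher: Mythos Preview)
Your proposal is correct and follows essentially the same route as the paper: take $\his^\prime \myvert o$ and $\shis \myvert o$ as the witnesses, then verify \textbf{L1}--\textbf{L3} using Propositions~\ref{lem_diffhis_oeq}, \ref{lem_subhis_op}, \ref{lem_subhis_complo} exactly as you do. If anything, your write-up is slightly more explicit than the paper's --- you spell out why sequentiality of $\shis$ descends to $\shis \myvert o$ and invoke Proposition~\ref{lem_subhis_ml} by name where the paper leaves the subhistory-order transfer implicit.
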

\begin{proof}
	Suppose {\shis} is one {\linz} of history {\his}, and $\his^\prime$ is the corresponding extension.
	We claim that $\shis \myvert o$ is a {\linz} of $\his \myvert o$ whose extension is just $\his^\prime \myvert o$.
	\begin{enumerate}[(a)]
		\item
		$\his^\prime \myvert o$ is a history since $\his^\prime$ is a history, and it is obviously that $\his \myvert o \subseteq \his^\prime \myvert o$.
		Since $\his^\prime \myvert o - \his \myvert o = (\his^\prime - \his) \myvert o$ by Proposition \ref{lem_diffhis_oeq} and $\his^\prime - \his$ contains only response events, $\his^\prime \myvert o - \his \myvert o$ contains only response events too.
		\item
		{\shis} is a legal, sequential and complete history, therefore $\shis \myvert o$ is also a legal, sequential and complete history.
		\item
		Because \cmplt($\his^\prime$) is equivalent to {\shis},
		\cmplt($\his^\prime$)$\myvert p = \shis \myvert p$.
		\item
		\begin{align*}
		\cmplt(\his^\prime \myvert o) \myvert p &= (\cmplt(\his^\prime) \myvert o) \myvert p \quad \text{by Proposition \ref{lem_subhis_complo}}\\
		&= (\cmplt(\his^\prime) \myvert p) \myvert o \quad \text{by Proposition \ref{lem_subhis_op}}\\
		&= (\shis \myvert p) \myvert o = (\shis \myvert o) \myvert p
		\end{align*}
		Thus $\cmplt(\his^\prime \myvert o)$ is equivalent to $\shis \myvert o$.
		\item
		\begin{align*}
		m \prec_{\cmplt(\his^\prime \myvert o)} m^\prime &\implies m \prec_{\cmplt(\his^\prime) \myvert o} m^\prime\\
		&\implies m \prec_{\cmplt(\his^\prime)} m^\prime\\
		&\implies m \prec_{\shis} m^\prime\\
		&\implies m \prec_{\shis \myvert o} m^\prime \quad \text{since } m, m^\prime \in \shis \myvert o
		\end{align*}
	\end{enumerate}
\end{proof}

\begin{lemma}\label{lemma_lin_if}
	{\his} is {\linb} if for each object $o$, $\his \myvert o$ is {\linb}.
\end{lemma}

The original proof~\cite{Herlihy86, Herlihy87, Herlihy90, Herlihy08} of this lemma is a proof of existence,
by demonstrating a global partial ordering among object {\linz}s exists.
Our idea, inspired by Multiway Merge Algorithm~\cite{Cormen}, is to construct the final {\linz} directly without building such global partial ordering.
The following Algorithm~\ref{alg:lingen} shows the details for finite histories.
The algorithmic step \ref{alg:lingen_sel} always succeeds because invocation events in each $\shis_o$ are also in {\his}
by Proposition \ref{lem_inv},
and events in {\his} are already well-ordered.

\begin{algorithm}
\renewcommand{\algorithmicrequire}{\textbf{Input:}}
\renewcommand{\algorithmicensure}{\textbf{Output:}}

\caption{Construct a {\linz} from object {\linz}s}
\label{alg:lingen}

\begin{algorithmic}[1]

\REQUIRE ~~\\
$n$ objects $\{o_1, o_2, \cdots, o_n\}$;

A history ${\his}$ and its object subhistories $\{\his_{o_1}, \his_{o_2}, \cdots, \his_{o_n}\}$;

{\Linz}s of object subhistories $\{\shis_{o_1}, \shis_{o_2}, \cdots, \shis_{o_n}\}$;

\smallskip
\ENSURE A {\linz} {\shis} of {\his};

\bigskip
\STATE INITIALIZE $\shis = \langle \emptyset, \emptyset \rangle$

\WHILE{some $\shis_{o} \neq \langle \emptyset, \emptyset \rangle$}
	 \STATE{Select a method call $m \in \shis_{o_i}$, whose invocation event is least in {\his}}\label{alg:lingen_sel}
	 \STATE{$\shis_{o_i} \gets \shis_{o_i} - m$}
	 \STATE{$\shis \gets \shis \ast m$}
\ENDWHILE
 
\end{algorithmic}
\end{algorithm}

We briefly discuss the asymptotic time and space complexity.
Suppose there are P processes and N events in {\his} in which P $\ll$ N.
Although there may be $r \in$ [0, P] pending invocation events in {\his},
we assume the total event number of all {\linz}s of object subhistories equals to N for simplicity since P $\ll$ N.
Events are associated with unique integer time stamps (may be P-ary vector clocks in practice) to determine their order.
Thus, our algorithm finishes in $\mathcal{O}(\text{N}\times\log{}\text{P})$ running time with $\mathcal{O}(\text{N})$ space.
A typical algorithm in the original compositionality proof has two steps:
the global partial ordering construction using a DAG runs in $\mathcal{O}(\text{N}^{2})$ time with $\mathcal{O}(\text{N} + \mid\text{Edges}\mid)$ space, in which $\mid\text{Edges}\mid \in [\frac{\text{N}}{2}-\text{P}, \frac{\text{N}\times(\text{N} - 2)}{8}]$;
the topological sorting runs in $\mathcal{O}(\text{N} + \mid\text{Edges}\mid)$ time with $\mathcal{O}(\text{N} + \mid\text{Edges}\mid)$ space.

\begin{proof}
	Let $(\his \myvert o)^\prime$
	be an extension of $\his \myvert o$ such that $(\his \myvert o)^\prime - \his \myvert o$ contains only response events, and $\shis_o$ be the corresponding {\linz}.
	We construct a sequential history $\shis = \langle \evts_{\shis}, \prec_{\shis} \rangle$ as:
	\begin{align*}
	\evts_{\shis} &= \bigcup\limits_o \, \evts_{\shis_o}\\
	\prec_{\shis} &= (\bigcup\limits_o \, \prec_{\shis_o}) \, \bigcup \, \{\langle e, e^\prime \rangle \mybigvert \exists m, m^\prime, o_i, o_j\\
	&\qquad ((e \in m \in \shis_{o_i}) \land (e^\prime \in m^\prime \in \shis_{o_j}) \land (i \neq j) \land (\myinv(m) \prec_{\his} \myinv(m^\prime)))\}
	\end{align*}
	
	Beware the fact that events in $(\his \myvert o_i)^\prime - \his \myvert o_i$ have no causality with those in both $(\his \myvert o_j)^\prime - \his \myvert o_j$ and $\his \myvert o_j$ if $i \neq j$.
	Therefore, we could give them an arbitrary ordering.
	Now we construct the extension history $\his^\prime$:
	\begin{align*}
	\evts_{\his^\prime} &= \bigcup\limits_o \, \evts_{(\his \myvert o)^\prime}\\
	\prec_{\his^\prime} &= (\bigcup\limits_o \, \prec_{(\his \myvert o)^\prime}) \, \bigcup \, \{\langle e, e^\prime \rangle \mybigvert (e \in \his \myvert o_i) \land (e^\prime \in (\his \myvert o_j)^\prime - \his \myvert o_j) \land (i \neq j)\}\\
	&\qquad \bigcup \, \{\langle e, e^\prime \rangle \mybigvert (e \in (\his \myvert o_i)^\prime - \his \myvert o_i) \land (e^\prime \in (\his \myvert o_j)^\prime - \his \myvert o_j) \land (i < j)\}
	\end{align*}
	
	We have the following facts on the constructed {\shis} and $\his^\prime$:
	\begin{enumerate}[(a)]
		\item
		{\shis} is a history. It is obvious that $\prec_{\shis}$ is a linear ordering.
		Suppose $A$ is a non-empty subset of $\evts_{\shis}$.
		Then $\Pi = \{A_i \mybigvert (A_i = A \bigcap \evts_{\shis_{o_i}}) \land (A_i \text{ is not empty})\}$ is a partition of $A$
		because all $\evts_{\shis_{o_i}}$ are pair-wisely disjoint.
		The set $\{e \mybigvert e = \myinv(\mym(e_i)) \text{ where } e_i \text{ is the least of } A_i\}$ does exist since each $A_i$ is well-ordered under $\prec_{\shis_{o_i}}$.
		Let $e^\star$ be the least event of such finite set.	
		Then, either $e^\star$ or its matching response event is the least element of $A$ depending on whether $e^\star \in A$ or not.
		Thus $\prec_{\shis}$ is a well ordering.
		
		Similarly, $\his^\prime$ is a history too.
		\item
		$\shis \myvert o = \shis_o$ and $\his^\prime \myvert o = (\his \myvert o)^\prime$
		\item
		$m \in \shis$ if and only if there exists an object $o$ such that $m \in \shis_o$
		\item
		$m \in \cmplt(\his^\prime)$ if and only if $m \in \shis$ because:
		\begin{align*}
		m \in \cmplt(\his^\prime) &\iff \exists o \ (m \in \cmplt(\his^\prime) \myvert o)\\
		&\iff \exists o \ (m \in \cmplt(\his^\prime \myvert o)) \quad \text{by Proposition \ref{lem_subhis_complo}}\\
		&\iff \exists o \ (m \in \cmplt((\his \myvert o)^\prime))\\
		&\iff \exists o \ (m \in \shis_o) \quad \text{since $\cmplt((\his \myvert o)^\prime)$ and $\shis_o$ are equivalent}\\
		&\iff m \in \shis \quad \text{by the construction of \shis}
		\end{align*}
	\end{enumerate}
	
	We claim that {\shis} is the {\linz} of {\his}, and $\his^\prime$ is the corresponding extension.
	
	First, $\evts_{\his^\prime - \his} = (\bigcup\limits_o \, \evts_{(\his \myvert o)^\prime}) \setminus \evts_{\his} = \bigcup\limits_o \, (\evts_{(\his \myvert o)^\prime} \setminus \evts_{\his \myvert o}) = \bigcup\limits_o \, \evts_{(\his \myvert o)^\prime - \his \myvert o}$. This is a set containing only response events.
	
	Second, {\shis} is sequential because:
	\begin{enumerate}[(a)]
		\item
		The least event $e^\star$ of {\shis} is also the least one of some $\shis_o$.
		Since $\shis_o$ is a sequential history, $e^\star$ is an invocation event.
		\item
		Let $m = \langle inv, resp \rangle$ be a complete method call.
		Suppose there exists a third event $e \in m^\prime$ that $inv \prec_{\shis} e \prec_{\shis} resp$.
		Then $m$ and $m^\prime$ are from different $\shis_o$ since each $\shis_o$ is a sequential history.
		However, $inv \prec_{\shis} e$ implies that $inv \prec_{\his} \myinv(m^\prime)$, then $resp \prec_{\shis} e$ by the construction of {\shis}.
		Such contradiction explains that the immediate successor of an invocation event must be its matching response event.
		\item
		Let $e$ be a response event and $e^\prime$ be its immediate successor in {\shis}.
		Let $m$ and $m^\prime$ be the two events' belonging method calls.
		If $m$ and $m^\prime$ are from the same $\shis_o$, then $e^\prime$ must be an invocation event since $\shis_o$ is sequential.
		If $m$ and $m^\prime$ are from different $\shis_o$ and $e^\prime$ is a response event,
		$e \prec_{\shis} e^\prime$ implies that $\myinv(m) \prec_{\his} \myinv(m^\prime)$,
		then $e \prec_{\shis} \myinv(m^\prime) \prec_{\shis} e^\prime$.
		Such contradiction means that the immediate successor of a response event must be an invocation event.
	\end{enumerate}
	
	Third, {\shis} is a legal history because {\shis} is sequential and $\shis \myvert o = \shis_o$ which is the {\linz} of a single-object history.
	
	Fourthly, for two different method calls $m$ and $m^\prime$, if $m \prec_{\cmplt(\his^\prime)} m^\prime$, then $m \prec_{\shis} m^\prime$:
	\begin{enumerate}[(a)]
		\item
		Both $m$ and $m^\prime$ are on the same object:
		\begin{align*}
		m \prec_{\cmplt(\his^\prime)} m^\prime &\implies \exists o \ ((m, m^\prime \in \cmplt(\his^\prime) \myvert o) \land (m \prec_{\cmplt(\his^\prime)} m^\prime))\\
		&\implies m \prec_{\cmplt(\his^\prime) \myvert o} m^\prime\\
		&\implies m \prec_{\cmplt(\his^\prime \myvert o)} m^\prime\\
		&\implies m \prec_{\cmplt((\his \myvert o)^\prime)} m^\prime\\
		&\implies m \prec_{\shis_o} m^\prime \quad \text{since $\shis_o$ is a {\linz}}\\
		&\implies m \prec_{\shis} m^\prime
		\end{align*}
		\item
		$m$ and $m^\prime$ are on different objects:
		\begin{align*}
		m \prec_{\cmplt(\his^\prime)} m^\prime &\implies \exists o_i, o_j \ ((i \neq j) \land (m \in \cmplt(\his^\prime) \myvert o_i)\\
		&\qquad \land (m^\prime \in \cmplt(\his^\prime) \myvert o_j) \land (m \prec_{\cmplt(\his^\prime)} m^\prime))\\
		&\implies \myinv(m) \prec_{\cmplt(\his^\prime)} \myresp(m) \prec_{\cmplt(\his^\prime)} \myinv(m^\prime)\\
		&\implies \myinv(m) \prec_{\his} \myinv(m^\prime) \quad \text{by Proposition \ref{lem_inv}}\\
		&\implies \myinv(m) \prec_{\shis} \myresp(m) \prec_{\shis} \myinv(m^\prime) \prec_{\shis} \myresp(m^\prime)\\
		&\implies m \prec_{\shis} m^\prime
		\end{align*}
	\end{enumerate}
	
	Finally, {\shis} is equivalent to $\cmplt(\his^\prime)$ because:
	\begin{enumerate}[(a)]
		\item
		$\cmplt(\his^\prime) \myvert p$ and $\shis \myvert p$ are sequential and complete histories.
		\item
		\begin{align*}
		m \langle i, j, x, y, \_ \rangle \in \cmplt(\his^\prime) \myvert p &\iff (p_i = p) \land (m \langle i, j, x, y, \_ \rangle \in \cmplt(\his^\prime))\\
		&\iff (p_i = p) \land (m \langle i, j, x, y, \_ \rangle \in \shis)\\
		&\iff m \langle i, j, x, y, \_ \rangle \in \shis \myvert p
		\end{align*}
		\item
		\begin{align*}
		&m \langle i, j, x, y, \_ \rangle \prec_{\cmplt(\his^\prime) \myvert p} m^\prime \langle r, s, u, v, \_ \rangle\\
		&\iff (m \langle i, j, x, y, \_ \rangle \prec_{\cmplt(\his^\prime)} m^\prime \langle r, s, u, v, \_ \rangle) \land (p_i = p_r = p)\\
		&\iff (m \langle i, j, x, y, \_ \rangle \prec_{\shis} m^\prime \langle r, s, u, v, \_ \rangle) \land (p_i = p_r = p)\\
		&\iff m \langle i, j, x, y, \_ \rangle \prec_{\shis \myvert p} m^\prime \langle r, s, u, v, \_ \rangle
		\end{align*}
		We complete the proof and conclude that {\shis} is indeed the {\linz} of {\his}.
	\end{enumerate}
\end{proof}

\section{Conclusion and Future Work}\label{sec:discussion}
Our contribution is twofold.
First, we propose a new hierarchical system model to reason about concurrent properties not limited to {\lin}.
Histories are defined as countable well-ordered structures of events
to deal with both finite and infinite executions, and reveal precisely the internal event ordering.
This definition has advantages of simplicity and extensibility over a simple sequence. 
Second, we present a new constructive proof on the compositionality theorem of {\lin}.
The proof deduces a more theoretically efficient Multiway Merge alike algorithm to generate a final {\linz} without the need to build the global partial ordering.

Since objects are defined in a hierarchical way by induction,
it is possible to unwind a history on high level objects to another one on low level objects,
or to fold a history on low level objects conversely.
Let $o = \langle \{o_1, o_2, \cdots, o_n\}, \{op_1, op_2, \cdots, op_k\} \rangle$ be a composite object.
Informally, if we replace each method call in $\his \myvert o$ with effect-equivalent consecutive method calls on $o_{i \in [1, n]}$,
we say the resulted $\his^c$ is a concretization of $\his$ on $o$.
On contrary, we aggregate all maximal consecutive method calls on $o_{i \in [1, n]}$ in each $\his \myvert p$ into individual effect-equivalent method calls on $o$,
and the final $\his^s$ is called the summarization of $\his$ on $o$.
To be noted, we need to assure they are histories.

Thus, one interesting and natural future work is to study property preservation across system layers
by history concretization and summarization.
It is not surprising that history summarization may break {\lin} because operation semantics could change across layers.
Find out the exact conditions under which {\lin} holds upward
will guide us in bottom-up system construction and verification~\cite{export:121499, vafeiadis2009shape, Colvin06formalverification, Vafeiadis06provingcorrectness, DBLP:conf/esop/ZomerGRS14}.

Another work is to automatically optimize {\linb} implementation.
After having identified the {\linb} region through history concretization,
code motion technique~\cite{Aho:1986:CPT:6448, nnh2010} may be used to move program statements unrelated to {\lin} out of such region to boost synchronization performance (e.g. people use coarse-grained locking).
It is also promising to apply our methodology and the hierarchical viewpoint to other properties
such as sequential consistency~\cite{Lamport:1979:MMC:1311099.1311750},
eventual consistency~\cite{Vogels:2009:EC:1435417.1435432, Serafini:2010:ELS:1835698.1835723},
wait-freedom~\cite{Herlihy:1991:WS:114005.102808} and obstruction-freedom~\cite{Herlihy:2003}.

\bibliographystyle{plainurl}
\bibliography{main}


\end{document}